\documentclass[aps,showpacs,prl,twocolumn,superscriptaddress]{revtex4-1}

\usepackage{exscale}
\usepackage{bbm}
\usepackage{graphicx}
\usepackage{amsmath}
\usepackage{latexsym}
\usepackage{amsfonts}
\usepackage{amssymb}
\usepackage{times}
\usepackage[T1]{fontenc}
\usepackage{amsthm}
\usepackage{enumerate}
\usepackage{bbold}
\usepackage{color}
\usepackage[colorlinks=true,citecolor=blue,urlcolor=blue]{hyperref}
\usepackage{mathtools}
\usepackage{changes}
\usepackage[normalem]{ulem}
\usepackage{amsfonts,amsmath,amssymb,amsthm}

\newcommand{\ket}[1]{\left|#1\right\rangle}

\theoremstyle{plain}

\newtheorem{lem}{Lemma}
\newtheorem{fakt}{Fact}

\newcommand{\tX}{\tilde{X}}
\newcommand{\tZ}{\tilde{Z}}
\newcommand{\tor}{\mathrm{tor}}
\newcommand{\sz}{\mathrm{Z}}
\newcommand{\sx}{\mathrm{X}}

\newcommand{\rA}{\mathrm{A}}

\begin{document}

\title{Device-independent certification of genuinely entangled subspaces: Supplemental Material}
\author{Flavio Baccari}
\email{flavio.baccari@mpq.mpg.de}
\affiliation{Max-Planck-Institut f\"ur Quantenoptik, Hans-Kopfermann-Stra{\ss}e 1, 85748 Garching, Germany}
\author{Remigiusz Augusiak}
\email{augusiak@cft.edu.pl}
\affiliation{Center for Theoretical Physics, Polish Academy of Sciences, Aleja Lotnik\'{o}w 32/46, 02-668 Warsaw, Poland}
\author{Ivan \v{S}upi\'{c}}
    \affiliation{D\'{e}partement de Physique Appliqu\'{e}e, Universit\'{e} de Gen\`{e}ve, 1211 Gen\`{e}ve, Switzerland}
\author{Antonio Ac\'{i}n}
\affiliation{ICFO-Institut de Ciencies Fotoniques, The Barcelona Institute of Science and Technology, 08860 Castelldefels (Barcelona), Spain}
\affiliation{ICREA - Instituci\'{o} Catalana de Recerca i Estudis Avancats, 08011 Barcelona, Spain}

\maketitle

In this Supplemental Material we prove that the subspaces considered in the main text contain only genuinely entangled states. We also provide rigorous proofs for the self-testing statements of Fact 1 and 2 in the main text. Furthermore, we analyse the robustness to noise of our self-testing statements and lastly, we study the geometrical structure of the faces maximally violating the inequalities $I_5$ and $I_N^{tor}$ presented in the main text.

\section{APPENDIX A: PROVING THAT SUBSPACES $C_5$ AND $C_N^{\mathrm{tor}}$ ARE GENUINELY ENTANGLED}

Here we prove that the subspaces corresponding to the five-qubit
code $C_5$ and the toric code $C_N^{\mathrm{tor}}$ are genuinely entangled, 
that is, contain only multipartite genuinely entangled states. 

We begin by providing a simple sufficient criterion for 
a subspace generated by a stabilizer $\mathbb{S}$ 
to be genuinely entangled. To this end, assume that $\mathbb{S}$ is generated by a set of $k$ stabilizing operators $\{S_i\}_{i=1}^k$. 
Consider then a bipartition of $N$ parties into 
two disjoint and nonempty groups $G$ and $G'$ such that $|G|+|G'|=N$, and denote by $\mathcal{G}$ the set of all such bipartitions. Given a bipartition $G|G'$, every stabilizing operator $S_i$ can be written as
\begin{equation}\label{division}
    S_i=S_i^G\otimes S_i^{G'}
\end{equation}
with $i=1,\ldots,k$, where each $S_i^{G}$ $(S_i^{G'})$ acts on the Hilbert space 
associated to the group $G$ ($G'$). Notice that due to the fact that for every pair $i\neq j$, the operators $S_i$ and $S_j$ commute, either 
\begin{equation}
 [S_i^G,S_j^G]=0\quad\mathrm{and}\quad   [S_i^{G'},S_j^{G'}]=0,
\end{equation}
or
\begin{equation}\label{antiCom}
 \{S_i^G,S_j^G\}=0\quad\mathrm{and}\quad   \{S_i^{G'},S_j^{G'}\}=0.
\end{equation}

Let us now formulate our criterion.

\begin{fakt}\label{fakt3}
Consider a stabilizer $\mathbb{S}$ generated by a set of stabilizing operators $S_i$ $(i=1,\ldots,k)$. If for every bipartition of $N$ parties into two disjoint and nonempty subsets $G$ and $G'$ 
there exist $i\neq j$ such that $\{S_i^{G},S_j^{G}\}=0$, then 
the subspace $C_N$ stabilized by $S_i$ is genuinely multipartite entangled.
\end{fakt}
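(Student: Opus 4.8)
The plan is to first reduce the statement to pure states and then to obtain a contradiction on each bipartition from the hypothesis. I would start by observing that it suffices to show that \emph{every pure state} $\ket{\psi}\in C_N$ is genuinely multipartite entangled: if some (possibly mixed) $\rho$ with $\mathrm{supp}(\rho)\subseteq C_N$ were biseparable, it would admit a convex decomposition into pure states, each of which is a product vector across some bipartition and each of which lies in $\mathrm{supp}(\rho)\subseteq C_N$; but then already a single biseparable pure state in $C_N$ would violate the pure-state claim. So fix a bipartition $G|G'\in\mathcal{G}$ and suppose, towards a contradiction, that some $\ket{\psi}\in C_N$ factorizes as $\ket{\psi}=\ket{\alpha}_G\otimes\ket{\beta}_{G'}$ with $\ket{\alpha},\ket{\beta}\neq 0$.

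Using the hypothesis, pick $i\neq j$ with $\{S_i^G,S_j^G\}=0$. Since $\ket{\psi}$ is stabilized, $S_i\ket{\psi}=\ket{\psi}$, i.e. $(S_i^G\ket{\alpha})\otimes(S_i^{G'}\ket{\beta})=\ket{\alpha}\otimes\ket{\beta}$. Each local factor $S_i^G$ is, up to a global sign, a tensor product of single-qubit Pauli operators, so $(S_i^G)^2=I$; in particular it is invertible and its spectrum is contained in $\{+1,-1\}$. Because the right-hand side above is a nonzero product vector, so is the left-hand side, and equality of two nonzero product vectors forces their first tensor factors to be proportional; hence $S_i^G\ket{\alpha}=a\ket{\alpha}$ for some $a\in\{+1,-1\}$, and likewise $S_j^G\ket{\alpha}=b\ket{\alpha}$ with $b\in\{+1,-1\}$. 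Now I would evaluate $S_i^GS_j^G$ on $\ket{\alpha}$ in two ways: directly it equals $ab\ket{\alpha}$, while $S_i^GS_j^G=-S_j^GS_i^G$ gives $-ab\ket{\alpha}$. Therefore $ab=-ab$, i.e. $ab=0$, which is impossible for $a,b\in\{+1,-1\}$. Hence $\ket{\psi}$ cannot be a product across $G|G'$; since $G|G'$ was an arbitrary element of $\mathcal{G}$, $\ket{\psi}$ is genuinely multipartite entangled, and by the reduction above $C_N$ is genuinely entangled.

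I do not expect a genuine obstacle: the argument is essentially computation-free and uses only (i) the standard reduction of ``genuinely entangled subspace'' to pure states supported on it, and (ii) the elementary fact that equality of two nonzero product vectors determines the local factors up to reciprocal scalars, so that a stabilized product state must be a common eigenvector of all the local restrictions $S_i^G$ — which local anticommutation then forbids. The only mildly delicate bookkeeping is tracking the global sign when one writes $S_i=S_i^G\otimes S_i^{G'}$, but this affects neither the invertibility nor the $\{+1,-1\}$ spectrum of $S_i^G$. The remaining work, carried out separately for $C_5$ and $C_N^{\mathrm{tor}}$, is the purely combinatorial task of exhibiting, for every bipartition, a pair of generators whose $G$-restrictions anticommute.
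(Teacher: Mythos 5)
Your proof is correct and follows essentially the same route as the paper: assume a product state across some bipartition, show that the $G$-restrictions $S_i^G$, $S_j^G$ must admit the local factor as a common eigenvector with nonzero eigenvalue, and derive a contradiction from their anticommutation. Your additional details (the reduction to pure states and the explicit $\pm 1$ eigenvalue bookkeeping) merely flesh out steps the paper leaves implicit.
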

\begin{proof} Let us begin by assuming that the subspace $C_N$ is not genuinely entangled, which means that it contains a pure state $\ket{\psi}$ such that 
\begin{equation}
    \ket{\psi}=\ket{\psi_G}\otimes\ket{\psi_{G'}}
\end{equation}
for some bipartition $G|G'$. From the fact that $S_i\ket{\psi}=\ket{\psi}$
we infer that with respect to this bipartition
\begin{equation}
    S_i^G\ket{\psi_G}=\mathrm{e}^{\mathrm{i}\varphi_i}\ket{\psi_G}
\end{equation}
for some $\varphi_i\in\mathbbm{R}$ (analogous identities hold true for the $G'$ group). This contradicts the fact that there exist $i\neq j$ such that 
\begin{equation}
    \{S_i^G,S_j^G\}\ket{\psi_G}=0,
\end{equation}
which completes the proof.
\end{proof}
Notice that, as $S_i$ mutually commute, the anticommutation relation in Fact \ref{fakt3} might also be formulated for $G'$.

\subsection{Subspace $C_5$}

Let us now illustrate the power of our criterion by applying it to the subspace $C_5$ corresponding to the five-qubit code. Recall that the corresponding stabilizing
operators are 
\begin{eqnarray}\label{5qubitStabApp}
\hspace{-0.5cm}&&S_1  = \sx^{(1)}\sz^{(2)} \sz^{(3)} \sx^{(4)},\qquad 
S_2  = \sx^{(2)} \sz^{(3)} \sz^{(4)} \sx^{(5)},  \nonumber\\
\hspace{-0.5cm}&&S_3 = \sx^{(1)} \sx^{(3)} \sz^{(4)} \sz^{(5)},\qquad 
S_4  = \sz^{(1)} \sx^{(2)}\sx^{(4)} \sz^{(5)}.
\end{eqnarray}
In the five-partite case, the relevant bipartitions can be divided into two possibilities: one party versus the rest (five such bipartitions) and two parties versus three (ten such cases). 

In the first case one notices that for every bipartition $G|G'$ with a single-element set $G=\{k\}$ there always exists a pair of the stabilizing operators such that $S_i^G=X^{(k)}$ and $S_j^{G}=Z^{(k)}$ and the condition (\ref{antiCom}) is satisfied. 

Let us then consider the second case. By direct check one realizes that for every bipartition $G|G'$ with $G=\{i,j\}$ such that $i<j$ there exists a stabilizing operator $S_m$ for which $S^G_m=\sx^{(i)}\sz^{(j)}$ $(i<j)$ and another stabilizing operator $S_n$ $(m\neq n)$ such that $S_n^G=\sx^{(j)}$ or
$S_n^G=\sz^{(i)}$ or $S_n^G=\sx^{(i)}\sx^{(j)}$, or finally $S_n^G=\sz^{(i)}\sz^{(j)}$.

\subsection{Subspace $C_{N}^{\mathrm{tor}}$} 

Here we show, employing Fact \ref{fakt3}, that the four-dimensional subspaces identified by the toric code consist of only genuinely entangled states. To this aim, it is enough to find, for every bipartition $G|G'$, two stabilizing operators that anticommute when restricted to $G$ or $G'$. Recall that the stabilizer associated to the toric code is generated by the following operators
\begin{equation}\label{app:toricgen}
S_v = \prod_{i \in v} \sx^{(i)} \,, \qquad  S_p = \prod_{i \in p} \sz^{(i)} \, .
\end{equation}
where one defines a different $S_v$ for every vertex and $S_p$ for every plaquette in the lattice. In the following we provide a constructive proof that such a set of stabilizing operators satisfies the assumption of Fact \ref{fakt3} for a lattice of any size. Before doing that, however, let us analyse some basic properties of this stabilizer. Notice that plaquette and vertex operators are composed of products of different Pauli matrices ($\sz$ and $\sx$ respectively) that, if taken independently, anticommute with each other. However, the products in \eqref{app:toricgen} are chosen carefully in order to define stabilizing operators that mutually commute. This happens because for every pair of plaquette and vertex operator, the subsets of qubits on which they act non-trivially are either disjoint or they overlap on exaclty two particles. In particular, the latter is exactly the case of a plaquette and a vertex generator acting non-trivially on the same pair of neighbouring qubits. For example, as shown in Figure \ref{fig:toricF}, 
if $j$ is a vertical edge qubit, $S_{v,\uparrow}^{(j)}$ and $S_{p,\rightarrow}^{(j)}$ would be two mutually commuting operators composed of a pair of anticommuting Pauli matrices acting on $j$ and its neighbour $j+L$. Notice that here we are adopting the notation introduced in Appendix B to denote stabilizing operators acting non-trivially on a given qubit. 

We are now ready to go back to proving that the toric code meets the assumptions of Fact \ref{fakt3}, i.e., given every bipartition $G|G'$, one can find two stabilizing operators that anticommute when restricted to the subset $G$. Thanks to the analysis of the commuting properties of the generators \eqref{app:toricgen} conducted above, it is now clear how to find two such operators. Namely, it suffices to find a pair of neighbouring qubits $i$ and $j$, belonging to a different subset of the bipartition, i.e., $i\in G$ and $j\in G'$ or \textit{vice versa}. Then, the two anticommuting operators will be the plaquette and vertex operator acting non-trivially on both the qubits restricted to, for example, the subset containing $i$. Indeed, since $j$ does not belong to the same subset, it follows that the non-trivial support of the two restricted generators now overlaps only on qubit $i$, where the two operators act with the Pauli matrix $\sx$ and $\sz$ respectively.

Therefore, proving that the subspace stabilized by the toric code is GME for any number of partices reduces to showing that such a pair of anticommuting operators can be found for every nontrivial bipartition $G|G'$ with $G,G'\neq \emptyset$. 

To see that this is indeed the case it is enough to realize that
for every bipartition $G|G'$ there exist a vertex operator $S_v$ which is divided between $G$ and $G'$ in the sense that at least one of the qubits $S_v$ acts on belongs to $G$ and at least one to $G'$. To prove this last statement, assume that such a vertex does not exist. Then, pick a vertex whose all qubits belong to, say $G$; recall that by assumption $G$ is not empty. Then all vertices connected to the chosen vertex by a qubit must also belong to $G$; recall that we assumed that there is no vertex divided between the two sets $G$ and $G'$.
Taking into account that all vertices in lattice are connected, following the above reasoning it is not difficult to realize that in fact all vertices must 
belong to $G$, meaning that $G'$ is empty. This, however, contradicts the
assumption that the bipartition $G|G'$ is nontrivial.

Consider then a vertex $v$ which is divided between $G$ and $G'$. 
We consider two possibilities: either one qubit associated to $v$ belongs 
to $G$ or two. The third case of three qubits belonging to $G$ is equivalent to the first one because we can always consider $G'$ instead of $G$. In the first case, we consider one of the two plaquette operators $S_p$ which act nontrivially on the qubit belonging to $G$. Then, it clearly follows that $S_v$ and $S_p$ anticommute when restricted to $G$. The second case is slightly more involved:
one has to idenfity a plaquette operator $S_p$ which acts nontrivially on only one of the $G$ qubits connected to the vertex $v$. In such a case, one can indeed show that $S_v^G$ and $S_p^G$ satisfy the anticommutation relation. 
Let us show that one can always find such a plaquette operator, by considering the two possible subcases of the two $G$ qubits connected to the vertex $v$ being: (i) both horizontal or vertical qubits, or (ii) a horizontal and a vertical qubit each.
For (i), we can label the two qubits without loss of generality as $j$ and $j+2L$. Then, by adopting the notation introduced above, a valid choice for the required plaquette operator is either $S_{p,\leftarrow}^{(j)}$ or $S_{p,\rightarrow}^{(j)}$ (see Fig. \ref{fig:toricF}). 
Similarly, in the subcase (ii), we can take the two qubits to be $j$ and $j+L$, so that the desired plaquette operator becomes $S_{p,\leftarrow}^{(j)}$.
This ends the proof.

\section{APPENDIX B: PROOFS OF SELF-TESTING STATEMENTS}
\label{AppA}

Here we provide full proofs of the self-testing
statements for the subspaces considered in the main text. 
For completeness we first recall a very useful fact, proven already in 
Refs. \cite{Popescu1992,Jed2}, that is used in our proofs.
\begin{lem}\label{lem:qubit}
\cite{Popescu1992,Jed2} Consider two hermitian operators $\tX$ and $\tZ$ acting on a Hilbert space $\mathcal{H}$ of dimension $D<\infty$ and satistfying the idempotency property $\tX^2 = \tZ^2 = \emph{\openone}$, as well as the anticommutation relation $\lbrace \tX , \tZ \rbrace = 0$. Then, $\mathcal{H}=\mathbbm{C}^2\otimes\mathbbm{C}^d$ for some $d$ such that $D=2d$, and there exists a local unitary operator $U$ for which
\begin{align}\label{iden}
U \tX U^\dagger = \sx \otimes \emph{\openone}_d \, , 
\qquad U \tZ U^\dagger = \sz \otimes \emph{\openone}_d \, ,  
\end{align}
where $\sx$ and $\sz$ are the $2\times 2$ Pauli matrices introduced before and $\emph{\openone}_d$ is the
identity matrix acting on the $d$-dimensional auxiliary Hilbert space.
\end{lem}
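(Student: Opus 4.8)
The plan is to exploit that $\tX$, being Hermitian with $\tX^2=\openone$, has spectrum contained in $\{+1,-1\}$, and to track how $\tZ$ moves its eigenspaces around. Write $\mathcal{H}=\mathcal{H}_+\oplus\mathcal{H}_-$, where $\mathcal{H}_\pm$ is the $\pm 1$ eigenspace of $\tX$, with projectors $P_\pm=(\openone\pm\tX)/2$. The anticommutation $\{\tX,\tZ\}=0$ gives $\tX(\tZ\ket{v})=-\tZ(\tX\ket{v})$, so $\tZ$ maps $\mathcal{H}_+$ into $\mathcal{H}_-$ and vice versa. Since $\tZ^2=\openone$, the operator $\tZ$ is invertible, hence these two restrictions are mutually inverse bijections; in particular $\dim\mathcal{H}_+=\dim\mathcal{H}_-=:d$, whence $D=2d$ and $\mathcal{H}\cong\mathbbm{C}^2\otimes\mathbbm{C}^d$.

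Next I would write the unitary down explicitly. Fix an orthonormal basis $\{\ket{e_i}\}_{i=1}^d$ of $\mathcal{H}_+$ and set $\ket{f_i}:=\tZ\ket{e_i}\in\mathcal{H}_-$. These are orthonormal since $\braket{f_i}{f_j}=\bracket{e_i}{\tZ^2}{e_j}=\delta_{ij}$, so $\{\ket{f_i}\}_{i=1}^d$ is an orthonormal basis of $\mathcal{H}_-$ and $\{\ket{e_i},\ket{f_i}\}$ an orthonormal basis of $\mathcal{H}$. Let $\ket{\pm}:=(\ket{0}\pm\ket{1})/\sqrt{2}$ denote the eigenvectors of $\sx$, and define $U$ on the basis by $U\ket{e_i}=\ket{+}\otimes\ket{i}$ and $U\ket{f_i}=\ket{-}\otimes\ket{i}$; then $U$ is unitary, as it sends an orthonormal basis to an orthonormal basis. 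The two identities now follow by a one-line check: $U\tX U^\dagger$ acts as $+1$ on $\ket{+}\otimes\ket{i}$ and as $-1$ on $\ket{-}\otimes\ket{i}$, i.e.\ equals $\sx\otimes\openone_d$; and, using $\tZ\ket{e_i}=\ket{f_i}$ together with $\tZ\ket{f_i}=\tZ^2\ket{e_i}=\ket{e_i}$, one finds that $U\tZ U^\dagger$ interchanges $\ket{+}\otimes\ket{i}\leftrightarrow\ket{-}\otimes\ket{i}$, which is exactly the action of $\sz\otimes\openone_d$.

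As for difficulty, there is essentially no obstacle: the statement is just the familiar fact that a pair of anticommuting involutions generates a copy of the single-qubit Pauli algebra, whose finite-dimensional representations decompose into copies of the defining two-dimensional one. The one point that needs a little care is the bookkeeping of the single-qubit basis — one must map the $\tX$-eigenspaces onto the $\sx$-eigenvectors $\ket{\pm}$ (not onto $\ket{0},\ket{1}$), so that $\tZ$, which swaps those eigenspaces, gets carried to $\sz$ rather than to $\sx$. An equivalent route, if one prefers, is to invoke Jordan's lemma to simultaneously block-diagonalise $\tX$ and $\tZ$ into blocks of size at most two, to note that a one-dimensional block would force a common eigenvector of $\tX$ and $\tZ$ — impossible for anticommuting involutions — and to bring each surviving $2\times 2$ block to the canonical pair $(\sx,\sz)$; grouping the identical blocks then produces the tensor-product form of \eqref{iden}.
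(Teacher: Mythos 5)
Your proof is correct and follows essentially the same route as the paper: diagonalize one of the two anticommuting involutions, use the other to map its two eigenspaces onto each other (forcing equal dimensions, hence $D=2d$), and construct the unitary from the resulting paired orthonormal bases. The only cosmetic difference is that you start from the eigenspaces of $\tX$ and send them to the $\sx$-eigenvectors $\ket{\pm}$, whereas the paper diagonalizes $\tZ$ and sends its eigenvectors to $\ket{0},\ket{1}$; your write-up also makes the required choice of basis in the second eigenspace (via $\ket{f_i}=\tZ\ket{e_i}$) slightly more explicit than the paper's corresponding step $\ket{e_i^-}=\tX\ket{e_i^+}$.
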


This lemma provides a way to characterise the measurement observables from their commutation properties. If the state to be self-tested can be tomographically retrieved by using two Pauli measurements, proving Lemma \ref{lem:qubit} basically reduces self-testing to quantum state tomography.

\begin{proof}
First of all, the fact that $\tX$ and $\tZ$ square to identity and are hermitian implies that their eigenvalues are $\pm 1$. This means that they are also unitary and allows one to rewrite the anticommutation relation $\{\tX,\tZ\}=0$ as 
\begin{equation}\label{iden2}
    \tX \tZ \tX = - \tZ,
\end{equation}
or as
\begin{equation}
    \tZ \tX \tZ = - \tX.
\end{equation}
As both $\tX$ and $\tZ$ are 
unitary and hermitian, these identities imply that 
the eigenspaces of both these operators
corresponding to the eigenvalues $\pm 1$ have equal dimensions. 
This has two consequences. The first one is that the Hilbert space
they both act on is $\mathcal{H}=\mathbb{C}^2\otimes \mathbb{C}^d$ with finite $d$ such that $D=2d$. The second one is that there
exists a unitary $U:\mathcal{H}\to\mathcal{H}$ which brings
the eigenvectors $\ket{e_i^{\pm}}$ of $\tZ$ 
to the product form
\begin{align}\label{ProdBasis}
U \ket{e_i^+} & = \ket{0}\ket{g_i} \, ,\\
U \ket{e_i^-} & = \ket{1}\ket{g_i} \, ,
\end{align}
with $\ket{g_i}$ being some orthogonal basis in $\mathbb{C}^d$,
which means that 
\begin{equation}
    U \tilde{Z}U^{\dagger}=\sz\otimes\mathbbm{1}_d.
\end{equation}
To obtain (\ref{iden}) for the $\tilde{X}$ operator 
it is enough to notice that Eq. (\ref{iden2}) implies that 
$\tilde{X}$ exchanges the eigenvectors of $\tilde{Z}$
corresponding to different eigenvalues, that is, 
\begin{equation*}
\ket{e_i^-} = \tX \ket{e_i^+}.    
\end{equation*}
Thus, in the product basis (\ref{ProdBasis}), $\tilde{X}$ is of the form
\begin{equation}
    U\tilde{X}U^{\dagger}=\sx\otimes \mathbb{1}_d, 
\end{equation}
which completes the proof.
\end{proof}

\subsection{Self-testing the five-qubit code subspace}

In this section we provide a detailed proof of self-testing of the $C_5$ subspace. 
To this aim, let us assume that a state $\ket{\phi}\in\mathcal{H}_{PE}$ and observables $\rA_{x_i}^{(i)}$ acting on $\mathcal{H}_{P_i}$   maximally violate our inequality $I_5$, which we recall here for completeness
\begin{eqnarray}\label{eq:I5}
I_5 &=&  \langle (\rA_0^{(1)} + \rA_1^{(1)}) \rA_1^{(2)} \rA_1^{(3)} \rA_0^{(4)} \rangle 
+\langle \rA_0^{(2)}\rA_1^{(3)}\rA_1^{(4)}\rA_0^{(5)} \rangle  \nonumber\\
&&+ \langle (\rA_0^{(1)} + \rA_1^{(1)})\rA_0^{(3)} \rA_1^{(4)} \rA_1^{(5)} \rangle  \nonumber\\
&&+ 2 \langle (\rA_0^{(1)} - \rA_1^{(1)})\rA_0^{(2)} \rA_0^{(4)} \rA_1^{(5)} \rangle \leq 5 \, \, .
\end{eqnarray}
Then, making the following substitutions
\begin{equation}
    \tilde{X}^{(1)} = \frac{1}{\sqrt{2}}\left[\rA_0^{(1)} + \rA_1^{(1)}\right],\quad
     \tilde{Z}^{(1)} = \frac{1}{\sqrt{2}}\left[\rA_0^{(1)} - \rA_1^{(1)}\right],
\end{equation}
and $\tilde{X}^{(i)}=\rA_0^{(i)}$ and $\tilde{Z}^{(i)}=\rA_1^{(i)}$
for $i=2,\ldots,5$, the operators $\tilde{S}_i$ introduced already 
in the main body of our work can be stated as
\begin{eqnarray}\label{5qubitStabII}
\hspace{-0.5cm}&&\tilde{S}_1  = \tilde{X}^{(1)}\tilde{Z}^{(2)} \tilde{Z}^{(3)} \tilde{X}^{(4)},\qquad 
\tilde{S}_2  = \tilde{X}^{(2)} \tilde{Z}^{(3)} \tilde{Z}^{(4)} \tilde{X}^{(5)},  \nonumber\\
\hspace{-0.5cm}&&\tilde{S}_3 = \tilde{X}^{(1)} \tilde{X}^{(3)} \tilde{Z}^{(4)} \tilde{Z}^{(5)},\qquad \label{eq:S3}
\tilde{S}_4  = \tilde{Z}^{(1)} \tilde{X}^{(2)}\tilde{X}^{(4)} \tilde{Z}^{(5)}. \nonumber\\ \label{eq:S4}
\end{eqnarray}
Let us also recall the sum-of-squares decomposition for $I_5$, namely
\begin{align} \label{eq:SOS5}
\beta_q \openone - \mathcal{B}_5  = & \frac{1}{\sqrt{2}} \left(\openone - \tilde{S}_1 \right)^2 + \frac{1}{2} \left(\openone -  \tilde{S}_2 \right)^2 \nonumber \\
& + \frac{1}{\sqrt{2}} \left(\openone -  \tilde{S}_3 \right)^2 + \sqrt{2} \left( \openone -  \tilde{S}_4 \right)^2,
\end{align}
From the decomposition \eqref{eq:SOS5} we deduce that the operators above satisfy the following conditions
\begin{equation}\label{conditions}
\tilde{S}_i\ket{\phi}=\ket{\phi}.
\end{equation}
where to simplify the notation, here and below we omit the identity acting on $\mathcal{H}_E$.
%
%

Let us now prove, using relations (\ref{conditions}), that for every $i$, the operators $\tilde{X}^{(i)}$ and $\tilde{Z}^{(i)}$ satisfy the conditions of Lemma \ref{lem:qubit} on the support of $\rho_i$ with the latter being the reduced density matrix of $\ket{\phi}$ corresponding to the Hilbert space $\mathcal{H}_{PE}$. First, it is direct to see that, by the very construction, $\tilde{X}^{(1)}$ and $\tilde{Z}^{(1)}$ satisfy the anticommutation relation
\begin{equation}\label{AntiCom1}
    \{\tilde{X}^{(1)},\tilde{Z}^{(1)}\}=0.
\end{equation}
To prove that they also square to identity on the support of $\rho_1$ we use the condition (\ref{conditions}) for $i=1$ and $i=4$, obtaining
\begin{equation}
    \tilde{S}_1^2\ket{\phi}=\tilde{S}_4^2\ket{\phi}=\ket{\phi},
\end{equation}
which due to the fact that $[\tilde{X}^{(i)}]^2=[\tilde{Z}^{(i)}]^2=\mathbbm{1}$ for $i=2,\ldots,4$ immediately imply that 
$[\tilde{X}^{(1)}]^2=[\tilde{Z}^{(1)}]^2=\mathbbm{1}$ on the support of $\rho_1$. 

Let us now prove the anticommutation relations for the remaining operators $\tilde{X}^{(i)}$ and $\tilde{Z}^{(i)}$  $(i=2,\ldots,5)$; notice that, by definition, 
they already satisfy $[\tilde{X}^{(i)}]^2=[\tilde{Z}^{(i)}]^2=\mathbbm{1}$.
To this end, we rewrite the conditions (\ref{conditions}) for $i =1,4$ as
\begin{align}
    \tilde{X}^{(2)} \ket{\phi} & = \tilde{Z}^{(1)} \tilde{X}^{(4)} \tilde{Z}^{(5)} \ket{\phi}, \nonumber \\
    \tilde{Z}^{(2)} \ket{\phi} & = \tilde{X}^{(1)} \tilde{Z}^{(3)} \tilde{X}^{(4)}  \ket{\phi},
\end{align}
which leads us to
\begin{equation}
    \lbrace \tilde{X}^{(2)} , \tilde{Z}^{(2)} \rbrace \ket{\phi} = \lbrace \tilde{X}^{(1)} , \tilde{Z}^{(1)} \rbrace \tilde{Z}^{(3)} \tilde{Z}^{(5)}  \ket{\phi} = 0,
\end{equation}
where the last equality is a consequence of Eq. (\ref{AntiCom1}). In a similar way one can exploit the operator relations stemming from Eq. (\ref{conditions})
to obtain anticommutation relations for the remaining three sites.
Indeed, we can combine the conditions (\ref{conditions}) for $\tilde{S}_3$ and $\tilde{S}_4$ to get
\begin{equation}
 \lbrace \tilde{X}^{(4)} , \tilde{Z}^{(4)} \rbrace \ket{\phi} = \lbrace \tilde{X}^{(1)} , \tilde{Z}^{(1)} \rbrace \tilde{X}^{(2)} \tilde{X}^{(3)}  \ket{\phi} = 0 \, ,
\end{equation}
and combine $\tilde{S}_2$ and $\tilde{S}_4$ to obtain
\begin{equation}
 \lbrace \tilde{X}^{(5)} , \tilde{Z}^{(5)} \rbrace \ket{\phi} = \lbrace \tilde{X}^{(4)} , \tilde{Z}^{(4)} \rbrace \tilde{Z}^{(1)} \tilde{Z}^{(3)}  \ket{\phi} = 0 \, .
\end{equation}
Finally, we combine the conditions arising from $\tilde{S}_2$ and $\tilde{S}_3$ to show
\begin{equation}
 \lbrace \tilde{X}^{(3)} , \tilde{Z}^{(3)} \rbrace \ket{\phi} = \lbrace \tilde{X}^{(5)} , \tilde{Z}^{(5)} \rbrace \tilde{Z}^{(1)} \tilde{Z}^{(2)}  \ket{\phi} = 0 \, .
\end{equation}

Given the above anticommutation relations for operators acting on all sites, we can now make use of Lemma \ref{lem:qubit} which allows us to introduce the local unitary operations 
$U_i:\mathcal{H}_{P_i}\mapsto \mathcal{H}_{P_i}$
that map the operators $\tZ^{(i)}, \tX^{(i)}$ to the qubit Pauli matrices in the sense that
\begin{equation}\label{eq:Paulirot}
U_i \tZ^{(i)} U_i^\dagger = \sz^{(i)} \otimes \openone_{P''_i}  \, , \quad  U_i \tX^{(i)} U_i^\dagger = \sx^{(i)} \otimes \openone_{P''_i} \, ,
\end{equation}
with $i = 1,\ldots,5$. If we define $\ket{\psi} = (U\otimes \mathbbm{1}_E) \ket{\phi}$, where $U = U_1 \otimes \ldots \otimes U_5$, it follows from the conditions \eqref{conditions} and the transformations \eqref{eq:Paulirot}, that
\begin{equation}\label{eq:stabqub}
S_i \otimes \openone_{P''E}\ket{\psi} = \ket{\psi} \qquad (i = 1,\ldots,4) \, ,    
\end{equation}
where $S_i$ are the stabilizer operators of $C_5$, defined in Eq.~\eqref{5qubitStabApp}.

Let us finally prove that the most general form of a state $\ket{\psi}\in(\mathbbm{C}^{2})^{\otimes 5}\otimes\mathcal{H}_{P''E}$ compatible with conditions (\ref{eq:stabqub}) is 
\begin{equation}
    \ket{\psi}=c\ket{\psi_1}\otimes \ket{\xi_1}+\sqrt{1-c^2}\ket{\psi_2}\otimes \ket{\xi_2},
\end{equation}
where $\ket{\psi_i}$ are two orthonormal states spanning $C_5$, 
$\ket{\xi_i}\in\mathcal{H}_{P''E}$ 
are some auxiliary states and $c\in[0,1]$.

To this aim, we consider
the Schmidt decomposition of $\ket{\psi}$ with respect to the tensor
product of the five-qubit Hilbert space $(\mathbbm{C}^{2})^{\otimes 5}$ and
$\mathcal{H}_{P''E}$, 
\begin{equation}\label{Schmidt}
    \ket{\psi}=\sum_{j}\lambda_j \ket{\eta_j}\ket{\varphi_j},
\end{equation}
where $\ket{\eta_j}\in (\mathbbm{C}^{2})^{\otimes 5}$ as well as $\ket{\varphi_j}\in\mathcal{H}_{P''E}$ are some orthogonal vectors 
in the corresponding Hilbert spaces. By plugging Eq. (\ref{Schmidt}) 
into Eq. (\ref{eq:stabqub}) we arrive at
\begin{equation}
   \sum_{j}\lambda_j (S_i\ket{\eta_j})\ket{\varphi_j} =
   \sum_{j}\lambda_j \ket{\eta_j}\ket{\varphi_j},
\end{equation}
which after projecting onto $\ket{\varphi_j}$ gives
$S_i\ket{\eta_j}=\ket{\eta_j}$ for every $i=1,2,3,4$. 
This means that each vector $\ket{\eta_j}$ belongs to the 
subspace $C_5$ and therefore it might be written as 
$\ket{\eta_j}=\alpha_j\ket{\psi_1}+\beta_j\ket{\psi_2}$
for some $\alpha_j$ and $\beta_j$ such that $|\alpha_j|^2+|\beta_j|^2=1$.
Substituting this last form into Eq. (\ref{Schmidt}), directly leads
to 
\begin{equation}
    \ket{\psi}=\ket{\psi_1}|\tilde{\xi}_1\rangle+\ket{\psi_2}|\tilde{\xi}_2\rangle
\end{equation}
with some in general unnormalized vectors $|\tilde{\xi}_j\rangle\in\mathcal{H}_{P''E}$.
Clearly, this can be rewritten as
\begin{equation}
    \ket{\psi}=c\ket{\psi_1}|\xi_1\rangle+\sqrt{1-c^2}\ket{\psi_2}
    |\xi_2\rangle,
\end{equation}
where $c\in[0,1]$ and $|\xi_i\rangle$ are now normalized.
This completes the proof.

\subsection{Self-testing the toric code subspace}

\begin{figure*}
\centering
\includegraphics[width=0.7\textwidth]{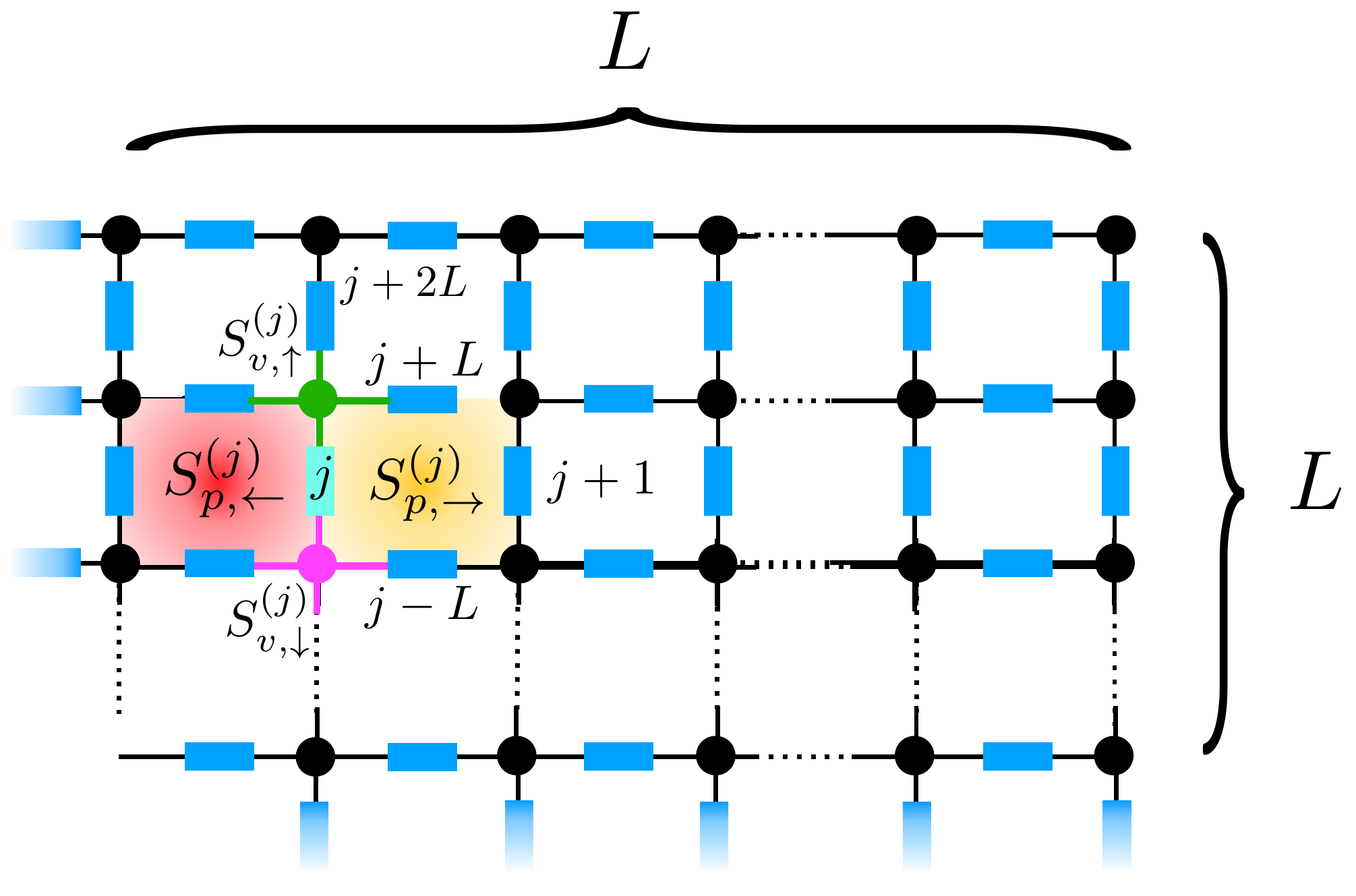}
\caption{Graphical representation of the toric code model, where the light blue rectangles placed on each edge of the lattice represent qubits. In short, we refer to a qubit associated to a vertical (horizontal) edge as a vertical(horizontal) qubit. To label qubits, we choose a reference vertical qubits $j$ and start counting forward by moving to the next vertical qubit on the right. By proceeding in this way, we label qubits from $j+1$ to $j+L-1$. After that, we move to the horizontal qubit on the top-right side of qubit $j$ and label it $j+L$, proceeding then forward with the next horizontal qubit to the right, until arriving at the qubit to the left of $j+L$. It then follows that the vertical qubit on top of $j$ would be labelled as $j+2L$.
Similarly, we can label plaquettes and vertex stabilizing operators depending on where are located with respect to a given qubit, as shown in the Figure for the case of qubit $j$. An analogous procedure can be done to label operators with respect to a horizontal qubit. For instance, we could refer to $S_{p,\rightarrow}^{(j)}$ as $S_{p,\downarrow}^{(j +L)}$, or to $S_{v,\uparrow}^{(j)}$ as $S_{v,\leftarrow}^{(j+L)}$.\label{fig:toricF}}
\end{figure*}

Let us begin by writing explicitly 
the Bell inequality maximally violated by every state belonging
to $C_N^{\mathrm{tor}}$. Recall to this end
that every qubit $i$ is contained in two plaquettes and is connected to two vertices, which we identify as $p^{(i)}_1,p^{(i)}_2$ and $v^{(i)}_1,v^{(i)}_2$ respectively. Now, our Bell inequality reads
\begin{equation}\label{eqapp:Itor}
    I_{N}^{\mathrm{tor}}:=\langle \mathcal{B}_N^{\mathrm{tor}}\rangle=\sum_{v}\langle \tilde{S}_v\rangle+\sum_{p}\langle \tilde{S}_p\rangle\leq \beta_c^{\mathrm{\tor}}(N),
\end{equation}
where the classical bound reads 
\begin{equation}
    \beta_c^{\mathrm{tor}}(N)=N-2\sqrt{2}(\sqrt{2}-1),
\end{equation}
whereas the Bell operator is given by
\begin{eqnarray}\label{eq:belltoric}
\mathcal{B}^{\tor}_N & = &\frac{1}{\sqrt{2}}\sum_{k = 1,2} (\rA_0^{(j)} + \rA_1^{(j)} ) \prod_{\substack{i \in v^{(i)}_k \\ i \neq j}} \rA_0^{(j)}  \nonumber \\  
&& + \frac{1}{\sqrt{2}}\sum_{k = 1,2} (\rA_0^{(j)} - \rA_1^{(j)} ) \prod_{\substack{i \in p^{(i)}_k \\ i \neq j}} \rA_1^{(j)} \nonumber \\
&& + \sum_{v \neq v^{(j)}_1,v^{(j)}_2} \prod_{i \in v} \rA_0^{(i)}
 + \sum_{p \neq p^{(j)}_1,p^{(j)}_2} \prod_{i \in p} \rA_1^{(i)},
\end{eqnarray}
where qubit $j$ is the chosen qubit for which our Bell expression contains 
combinations of observables.

The maximal quantum violation of this inequality amounts to 
\begin{eqnarray}\label{maxQtor}
    \beta_q^{\mathrm{tor}}(N) &=& 4  + |p| + |v| -4\nonumber\\
    &=& |p| + |v|=N
\end{eqnarray}
and can be achieved by the following measurements
\begin{equation}
\rA_0^{(j)} = \frac{\sx^{(j)}+\sz^{(j)}}{\sqrt{2}} \, , \qquad    \rA_1^{(j)}   = \frac{\sx^{(j)}-\sz^{(j)}}{\sqrt{2}},
\end{equation}
for the party $j$, and 
\begin{equation}
\rA_0^{(i)} =\sx^{(i)}  ,  \qquad   \rA_1^{(i)}  = \sz^{(i)} 
\end{equation}
for the remaining parties $i\neq j$, and every state belonging to $C^{\tor}_N$. 
It follows that \eqref{eq:belltoric} is violated for every $N$.

To prove that (\ref{maxQtor}) is indeed the maximal quantum value of $\mathcal{B}_N^{\mathrm{tor}}$ one checks that the following sum-of-squares decomposition holds true
\begin{widetext}
\begin{eqnarray}\label{eq:toricSOS}
\beta_q^{\mathrm{tor}}(N)\openone - \mathcal{B}^{\tor}_N & =& \frac{1}{2} \sum_{k = 1,2} \left( \openone -  \frac{\rA_0^{(j)} + \rA_1^{(j)}}{\sqrt{2}} \prod_{\substack{i \in v^{(j)}_k \\ i \neq j}} \rA_0^{(i)}  \right)^2  + \frac{1}{2} \sum_{k = 1,2} \left( \openone -  \frac{\rA_0^{(j)} - \rA_1^{(j)}}{\sqrt{2}} \prod_{\substack{i \in p^{(j)}_k \\ i \neq j}} \rA_1^{(i)}  \right)^2 \nonumber \\
&& + \frac{1}{2} \sum_{v \neq v^{(j)}_1,v^{(j)}_2} \left( \openone -  \prod_{i \in v} \rA_0^{(i)} \right)^2 + \frac{1}{2} \sum_{p \neq p^{(j)}_1,p^{(j)}_2} \left( \openone -  \prod_{i \in p} \rA_1^{(i)} \right)^2 \, .
\end{eqnarray}
\end{widetext}

Let us now move on to proving Fact 2 of the main text.
To this end, let us assume that an $N$-partite state $\ket{\phi}$ 
and observables $\rA_{x_i}^{(i)}$ $(x_i=0,1)$ maximally violate our Bell inequality, that is
\begin{equation}\label{condition}
    \langle\phi |\mathcal{B}_N^{\mathrm{tor}}|\phi\rangle = \beta_q^{\mathrm{tor}}(N)
\end{equation}
with $\mathcal{B}_N^{\mathrm{tor}}$ given in Eq. (\ref{eq:belltoric}).

Let us then introduce the operators
\begin{equation}\label{operators}
    \tilde{X}^{(j)} = \frac{\rA_0^{(j)} + \rA_1^{(j)}}{\sqrt{2}},
    \qquad \tilde{Z}^{(j)} = \frac{\rA_0^{(j)} - \rA_1^{(j)}}{\sqrt{2}}
\end{equation}
for the chosen party $j$, and $\tilde{X}^{(i)}, \tilde{Z}^{(i)} = \rA_0^{(i)}, \rA_1^{(i)}$ for all $i\neq j$. We now make use of the condition resulting from the sum-of-squares decomposition \eqref{eq:toricSOS} to prove that for every $i$, the operators $\tX^{(i)},\tZ^{(i)}$ anticommute and square to identity on the support of $\rho_i$ with the latter being the reduced density matrix of $\ket{\phi}$. Indeed, from Eqs. (\ref{eq:toricSOS}) and (\ref{condition}) one infers that
\begin{equation}\label{eq:stabVP}
    \tilde{S}_v \ket{\phi} = \tilde{S}_p \ket{\phi} = \ket{\phi} \, \quad \forall p,v .
\end{equation}

To do so, it is convenient to associate to each qubit in the lattice the subset of operators \eqref{app:toricgen} that act non-trivially on it, that is, those that contain $\sx$ or $\sz$ on this site. It is easy to see that for a two-dimensional lattice of any size there are precisely 
two vertex and two plaquette operators acting nontrivialy on each qubit. For later convenience, let us then introduce a notation for these four non-trivial generators: if the qubit $j$ corresponds to a horizontal edge, we identify the corresponding generators as
\begin{equation}\label{app:substab1}
\lbrace S_{v,\leftarrow}^{(j)},S_{v,\rightarrow}^{(j)},S_{p,\uparrow}^{(j)},S_{p,\downarrow}^{(j)} \rbrace \, ,
\end{equation}
where the arrows in the notation refer to where the vertex (plaquette) is located with respect to the qubit.
Similarly, if the qubit $j$ corresponds to a vertical edge, we denote the
corresponding four stabilizing operators as
\begin{equation}\label{app:substab2}
\lbrace S_{v,\uparrow}^{(j)},S_{v,\downarrow}^{(j)},S_{p,\leftarrow}^{(j)},S_{p,\rightarrow}^{(j)} \rbrace \, .
\end{equation}
Notice that, given two neighbouring qubits (i.e., qubits associated to edges connected to the same vertex), some of the elements in the two subsets in \eqref{app:substab1} and \eqref{app:substab2} are common (see Fig. \ref{fig:toricF} for an example).

Let us now focus on the qubit $j$ and consider a vertex and a plaquette operator that act non-trivially on $j$. Withouth loss of generality, we can take them to be $S_{v,\uparrow}^{(j)}$ and $S_{p,\rightarrow}^{(j)}$. From the stabilizing conditions (\ref{eq:stabVP}) for $\tilde{S}_{v,\uparrow}^{(j)}$ and  $\tilde{S}_{p,\rightarrow}^{(j)}$ as well as the fact that, by the very construction, the operators $\tilde{X}^{(i)}$ and $\tilde{Z}^{(i)}$ with $i\neq j$ square to identity, we obtain
\begin{equation}
\left(\tX^{(j)}\right)^2 \ket{\phi} = \left(\tZ^{(j)}\right)^2 \ket{\phi} =  \ket{\phi}.
\end{equation}
This directly implies that, as anticipated, both $\tX^{(j)}$
and $\tZ^{(j)}$ square to identity on the support of $\rho_j$. Then, by virtue
of Eq. (\ref{operators}), one directly sees that 
\begin{equation}
    \{\tilde{X}^{(j)},\tilde{Z}^{(j)}\}=0.
\end{equation}
Let us now move to the remaining pairs of operators 
$\tilde{X}^{(i)}$ and $\tilde{Z}^{(i)}$ with $i\neq j$ and prove
that they anticommute too; recall that by definition they square to identity.
We proceed in a recursive way. 

First, let us assume that $j$ is associated to a vertical edge and consider one of the neighbours of
$j$, denoted $j+L$ (cfr. Fig. \ref{fig:toricF}). To prove the anticommutation relation for the operators
acting on this qubit we assume, without any loss of generality that the plaquette and vertex operators shared between party $j$ and $j+L$ are exactly $\tilde{S}_{v,\uparrow}^{(j)}$ and  $\tilde{S}_{p,\rightarrow}^{(j)}$.
By making use of the corresponding stabilizing conditions \eqref{eq:stabVP} and the fact that all the local operators square to identity on the support of the state, we obtain the following equations
\begin{align}
\tX^{(j+L)} \ket{\phi} & = \tX^{(j)} \tX^{(j+2L)} \tX^{(j+2L-1)}  \ket{\phi}, \nonumber \\
\tZ^{(j+L)} \ket{\phi} & = \tZ^{(j)} \tZ^{(j+1)}  \tZ^{(j-L)}  \ket{\phi}, 
\end{align}
which lead us to
\begin{eqnarray}\label{anticommL}
\lbrace \tX^{(j+L)} , \tZ^{(j+L)} \rbrace \ket{\phi} &=& \lbrace \tX^{(j)} , \tZ^{(j)} \rbrace\tX^{(j+2L)} \tX^{(j+2L-1)} \nonumber\\
&&\times \tZ^{(j+1)}  \tZ^{(j-L)} \ket{\phi} \nonumber\\
&=& 0.
\end{eqnarray}

In exactly the same way we can prove the anticommutation relations for the remaining three neighbours of $j$ which share the plaquette and the vertex operators with it. To this end, it is enough to combine the stabilizing conditions (\ref{eq:stabVP}) corresponding to these operators for $j$ and its neighbour in the same way as in (\ref{anticommL}). 

One then realizes that the same approach can be applied to every neighbour of these four neighbours of $j$, and, thus recursively to every qubit of the lattice, which finally gives
\begin{equation}
    \{\tilde{X}^{(i)},\tilde{Z}^{(i)}\}=0
\end{equation}
for every $i$.

Having the anticommutation relations for all the parties, we make use of Lemma \ref{lem:qubit} which implies the existence of a unitary operations $U_i$ acting on $\mathcal{H}_{P_i}$ such that 
\begin{equation}
    U\tilde{X}^{(i)}U^{\dagger}=\sx^{(ji}\otimes \mathbbm{1}_{P''_i},\quad
    U\tilde{Z}^{(i)}U^{\dagger}=\sz^{(i)}\otimes \mathbbm{1}_{P''_i}
\end{equation}
for every $i$. Now, denoting $\ket{\psi} = U\otimes\mathbbm{1}_E \ket{\phi}$, we see that $\ket{\psi}$ satisfies the following stabilizing conditions 
\begin{align}
S_v \otimes \openone_{P''E} \ket{\psi} & = \ket{\psi} \, \quad \forall \, v \, , \nonumber \\
S_p \otimes \openone_{P''E} \ket{\psi} & = \ket{\psi} \, \quad \forall \, p \, ,
\end{align}
where $S_v$ and $S_p$ are the stabilizing operators given in \eqref{app:toricgen}.
Using the same method as in the case of the five-qubit code one can show that 
\begin{equation}
    \ket{\psi}=\sum_{i=1}^4c_i\ket{\psi_i}\otimes\ket{\xi_i},
\end{equation}
where $c_i$ are nonnegative numbers such that $c_1^2+c_2^2+c_3^2+c_4^2=1$, 
$\ket{\psi_i}$ are $N$-qubit vectors spanning $C_N^{\mathrm{tor}}$
and $\ket{\xi_i}\in\mathcal{H}_{P''E}$ are some auxiliary states. This completes the proof.

\section{Appendix C: Noise robustness bounds for subspace self-testing}

Here we address the problem of applying our self-testing protocols to realistic experimental scenarios. This requires being able to generalize self-testing statements to situations where, due to experimental imperfections, the maximal violation of the considered Bell inequalities cannot be attained.
In particular, suppose that upon performing suitable local measurements on a state $\rho^P$ the prover is able to produce correlations reaching the value $\beta \leq \beta_q$ for the considered Bell expression $I_N$, chosen such that it allows to self-test a subspace $\mathcal{H}_s = \text{span} (\{\ket{\psi_i}\}_{i=1}^k)$.
In order to formalize the problem in a quantitative manner we employ
the \textit{subspace extractability} which is a generalization of
the state extractability introduced in Ref. \cite{Bardyn2009} and used in 
the self-testing context in Ref. \cite{Jed2} to the case of subspaces. 
It is defined as
\begin{equation}\label{extract}
\Theta (\rho^P \rightarrow \mathcal{H}_s ) = \max_{\sigma_s \in \mathcal{B}(\mathcal{H}_s)} \max_{\Lambda_{i}} F(\sigma_s ,(\Lambda_1 \otimes \ldots \otimes\Lambda_N) (\rho^P)), 
\end{equation}
where the first maximum is over all mixed states supported on the considered subspace $\mathcal{H}_s$, whereas the second maximum is over 
all local quantum channels $\Lambda:\mathcal{B}(\mathcal{H}_{P_i})\to \mathcal{B}(\mathbbm{C}^2)$. Finally,
$F(\rho,\sigma)$ stands for the state fidelity defined as
\begin{equation}
F(\rho,\sigma)=\|\sqrt{\rho}\sqrt{\sigma}\|_1^2
\end{equation}
with $\|\cdot\|_1$ being the trace norm, $\|X\|_1=\mathrm{tr}\sqrt{X^{\dagger}X}$. 
Notice that here, for the sake of consistency with previous approaches, we formulate our self-testing statement in terms of local channels instead of local unitaries. The two formulations are equivalent, since one can simply trace out the purification space $E$ in the l.h.s of Eq. (3) in the main text and obtain local channels acting on $\rho^P$. Hence, the self-testing properties of the inequality $I_N$ imply that, if the observed violation is $\beta = \beta_q$, then $\Theta (\rho^P \rightarrow \mathcal{H}_s ) = 1$.

Here we show how to derive non-trivial statements for cases where  $\beta < \beta_q$, namely by bounding the subspace extractabiliy as a function of the observed violation. To do so, we make use of the following Fact, proven at the end of the section:

\begin{fakt}\label{extra}
Consider a Hilbert space $\mathcal{H}$ and a subspace $\mathcal{H}_s\subseteq\mathcal{H}$. Then, for any $\rho$ acting on $\mathcal{H}$
the following identity holds true
\begin{eqnarray}\label{dupa}
\max_{\sigma\in \mathcal{B}(\mathcal{H}_{s}) } F(\sigma ,\rho)=\mathrm{tr} (P_s \rho),
\end{eqnarray}
where $P_{s}$ is the projector onto the subspace $\mathcal{H}_s$ and the maximum is taken over all density matrices supported on $\mathcal{H}_s$.
\end{fakt}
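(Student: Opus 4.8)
The plan is to prove the identity $\max_{\sigma\in\mathcal{B}(\mathcal{H}_s)}F(\sigma,\rho)=\mathrm{tr}(P_s\rho)$ by establishing the two inequalities separately. For the ``$\geq$'' direction I would exhibit an explicit feasible $\sigma$ supported on $\mathcal{H}_s$ that achieves the value $\mathrm{tr}(P_s\rho)$. The natural candidate is the normalized projection of $\rho$ onto the subspace, namely $\sigma^\star=P_s\rho P_s/\mathrm{tr}(P_s\rho)$ when $\mathrm{tr}(P_s\rho)\neq 0$ (and the statement is trivial when $\mathrm{tr}(P_s\rho)=0$, since then $\rho$ has no overlap with $\mathcal{H}_s$ and the fidelity with any $\sigma$ supported there is $0$). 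Then I would compute $F(\sigma^\star,\rho)=\|\sqrt{\sigma^\star}\sqrt{\rho}\|_1^2$ and show it equals $\mathrm{tr}(P_s\rho)$.

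For the computation itself, the cleanest route uses the variational characterization of fidelity, $\sqrt{F(\sigma,\rho)}=\max_{U}|\mathrm{tr}(\sqrt{\sigma}\,U\sqrt{\rho})|$ over unitaries $U$, or equivalently Uhlmann's theorem. Writing $\sqrt{\sigma^\star}=\sqrt{\sigma^\star}P_s$ (since $\sigma^\star$ is supported on $\mathcal{H}_s$, so $P_s\sqrt{\sigma^\star}=\sqrt{\sigma^\star}$), one gets $\mathrm{tr}(\sqrt{\sigma^\star}\,U\sqrt{\rho})=\mathrm{tr}(\sqrt{\sigma^\star}P_s U\sqrt{\rho})$, and the optimal choice of $U$ aligns the phases; a direct estimate using Cauchy--Schwarz, $|\mathrm{tr}(\sqrt{\sigma^\star}\,U\sqrt{\rho})|\leq\|\sqrt{\sigma^\star}\|_2\,\|P_s U\sqrt{\rho}\|_2\leq\|\sqrt{\sigma^\star}\|_2\,\|P_s\sqrt{\rho}\|_2$, gives the bound and one checks it is saturated by the polar-decomposition choice of $U$. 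Since $\|\sqrt{\sigma^\star}\|_2^2=\mathrm{tr}(\sigma^\star)=1$ and $\|P_s\sqrt{\rho}\|_2^2=\mathrm{tr}(\sqrt{\rho}P_s\sqrt{\rho})=\mathrm{tr}(P_s\rho)$, this yields $F(\sigma^\star,\rho)=\mathrm{tr}(P_s\rho)$, giving the ``$\geq$'' direction and simultaneously suggesting the upper bound.

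For the ``$\leq$'' direction I would take an arbitrary $\sigma$ supported on $\mathcal{H}_s$, so $\sqrt{\sigma}=P_s\sqrt{\sigma}=\sqrt{\sigma}P_s$, and run the same Cauchy--Schwarz estimate: for any unitary $U$, $|\mathrm{tr}(\sqrt{\sigma}\,U\sqrt{\rho})|=|\mathrm{tr}(P_s\sqrt{\sigma}\,U\sqrt{\rho})|\leq\|\sqrt{\sigma}\|_2\,\|P_s U\sqrt{\rho}\|_2$. Here one has to be a little careful: $\|P_s U\sqrt{\rho}\|_2$ depends on $U$, and it is \emph{not} in general bounded by $\|P_s\sqrt{\rho}\|_2$ for arbitrary $U$. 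So instead I would bound $\|\sqrt{\sigma}\|_2^2=\mathrm{tr}(\sigma)=1$ and $\|P_sU\sqrt{\rho}\|_2^2=\mathrm{tr}(\sqrt{\rho}U^\dagger P_s U\sqrt{\rho})=\mathrm{tr}(\rho\, U^\dagger P_s U)$, and note that $U^\dagger P_s U$ is a projector of the same rank as $P_s$; hence $\mathrm{tr}(\rho\,U^\dagger P_s U)\le\max\{\mathrm{tr}(\rho Q): Q\text{ a rank-}(\dim\mathcal{H}_s)\text{ projector}\}$, which by the Ky Fan / min-max principle equals the sum of the $\dim\mathcal{H}_s$ largest eigenvalues of $\rho$ --- and this can exceed $\mathrm{tr}(P_s\rho)$. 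This is the main obstacle: the naive Cauchy--Schwarz split is too lossy. The fix is to not split the trace but instead observe $\mathrm{tr}(\sqrt{\sigma}\,U\sqrt{\rho})=\mathrm{tr}(\sqrt{\sigma}\,U\sqrt{\rho}P_s)+\mathrm{tr}(\sqrt{\sigma}\,U\sqrt{\rho}(\mathbb{1}-P_s))$ is wrong in that direction too; rather, use $\sqrt\sigma=\sqrt\sigma P_s$ on the \emph{left} and keep $\sqrt\rho$ intact: $|\mathrm{tr}(\sqrt{\sigma}\,U\sqrt{\rho})|=|\mathrm{tr}(\sqrt{\sigma}(P_sU)\sqrt{\rho})|\le \|\sqrt\sigma\sqrt\rho'\|$ type bounds via the fact that fidelity is monotone under the CPTP map $\mathcal{E}(X)=P_sXP_s+\mathrm{tr}((\mathbb{1}-P_s)X)\,\tau$ for a fixed state $\tau$ on $\mathcal{H}_s$. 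Indeed $\mathcal{E}(\sigma)=\sigma$ since $\sigma$ lives on $\mathcal{H}_s$, while $\mathcal{E}(\rho)=P_s\rho P_s+(1-\mathrm{tr}P_s\rho)\tau$, and monotonicity gives $F(\sigma,\rho)\le F(\sigma,\mathcal{E}(\rho))$. I would then bound $F(\sigma,\mathcal{E}(\rho))\le\mathrm{tr}(P_s\rho)\cdot F(\sigma,\sigma')+(\text{correction})$ using joint concavity of fidelity, or more directly observe that $\mathcal{E}(\rho)\succeq P_s\rho P_s$ and $F(\sigma,\cdot)$ is operator-monotone in its second argument in the appropriate sense, so $F(\sigma,\mathcal{E}(\rho))$ is controlled; combined with the explicit optimizer $\sigma^\star$ found above, which attains exactly $\mathrm{tr}(P_s\rho)$, this pins the maximum. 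I expect the delicate point to be phrasing this monotonicity cleanly so that the upper bound matches the lower bound exactly; the cleanest presentation is probably: (a) $F(\sigma,\rho)\le F(\sigma,\mathcal{E}(\rho))$ by data processing; (b) since $\sigma$ and $\mathcal{E}(\rho)$ both act on $\mathcal{H}_s$ (viewing $\tau$ supported there), and $\mathcal{E}(\rho)=\mathrm{tr}(P_s\rho)\,\sigma^\star+(1-\mathrm{tr}(P_s\rho))\tau$, pick $\tau=\sigma^\star$ so $\mathcal{E}(\rho)=\sigma^\star$ exactly, whence $F(\sigma,\rho)\le F(\sigma,\sigma^\star)\le 1$ — no wait, that loses the factor; the correct choice is to note $\mathrm{tr}(P_s\mathcal{E}(\rho))$ behavior — so ultimately I would just present the direct two-sided argument: lower bound via $\sigma^\star$, upper bound via $F(\sigma,\rho)=\|\sqrt\sigma\sqrt\rho\|_1^2=\||\sqrt\rho\rangle\!\langle\sqrt\sigma|\,\|$-style with $\sqrt\sigma P_s=\sqrt\sigma$, giving $F(\sigma,\rho)\le\mathrm{tr}(\sigma)\,\mathrm{tr}(P_s\rho)=\mathrm{tr}(P_s\rho)$ once one correctly uses $\|\sqrt\sigma\,X\|_1\le\|\sqrt\sigma\|_2\|X\|_2$ with $X=P_s\sqrt\rho$ (legitimate because $\sqrt\sigma=\sqrt\sigma P_s$ forces the trace to only see $P_s\sqrt\rho$). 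I will double-check that last identity $\mathrm{tr}(\sqrt{\sigma}\,U\sqrt{\rho})=\mathrm{tr}(\sqrt{\sigma}P_s\,U\sqrt{\rho})=\mathrm{tr}(\sqrt\sigma\,(P_sU\sqrt\rho))$ does restrict to $P_s\sqrt\rho$ only after also inserting $P_s$ on the right of $\sqrt\rho$, which is \emph{not} free — hence the honest argument must go through the data-processing inequality as in step (a) above, and that is the step I flag as the crux of the proof.
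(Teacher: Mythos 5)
Your proof of the ``$\leq$'' direction --- which is the whole content of the Fact, since the ``$\geq$'' part is just exhibiting $\sigma^\star=P_s\rho P_s/\mathrm{tr}(P_s\rho)$ --- is not actually completed, and you say so yourself: the data-processing route with $\tau=\sigma^\star$ collapses to the trivial bound $F(\sigma,\rho)\le 1$, and your fallback Cauchy--Schwarz argument is abandoned on the grounds that the needed insertion of $P_s$ is ``not free''. In addition, the chain you use to evaluate $F(\sigma^\star,\rho)$ contains the inequality $\|P_sU\sqrt{\rho}\|_2\le\|P_s\sqrt{\rho}\|_2$, which is false for general $U$ (as you yourself point out two paragraphs later), so even the lower-bound computation is not established as written. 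So the crux is flagged but not resolved: that is a genuine gap.

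The fix is a one-line symmetrization, and it is exactly what the paper does. For any $\sigma$ supported on $\mathcal{H}_s$ one has $P_s\sqrt{\sigma}=\sqrt{\sigma}P_s=\sqrt{\sigma}$, hence $\|\sqrt{\sigma}\sqrt{\rho}\|_1=\mathrm{tr}\sqrt{\sqrt{\sigma}\,\rho\,\sqrt{\sigma}}=\mathrm{tr}\sqrt{\sqrt{\sigma}\,P_s\rho P_s\,\sqrt{\sigma}}=\|\sqrt{\sigma}\sqrt{P_s\rho P_s}\|_1$: because $\sqrt{\sigma}$ sits on \emph{both} sides of $\rho$ in this form, the replacement $\rho\to P_s\rho P_s$ costs nothing. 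Pulling out the normalization gives the exact factorization $F(\sigma,\rho)=\mathrm{tr}(P_s\rho)\,F(\sigma,\rho_s)$ with $\rho_s=\sigma^\star$, which yields both directions simultaneously: $F(\sigma,\rho)\le\mathrm{tr}(P_s\rho)$ since $F(\sigma,\rho_s)\le1$, with equality at $\sigma=\rho_s$. Note also that your own fallback does work if you stay with the trace-norm expression instead of the Uhlmann unitary: $\sqrt{\sigma}\sqrt{\rho}=\sqrt{\sigma}\,(P_s\sqrt{\rho})$ is an exact operator identity (only the left factor is modified), so H\"older, $\|AB\|_1\le\|A\|_2\|B\|_2$, gives $\sqrt{F(\sigma,\rho)}\le\|\sqrt{\sigma}\|_2\,\|P_s\sqrt{\rho}\|_2=\sqrt{\mathrm{tr}(P_s\rho)}$ --- no insertion to the right of $\sqrt{\rho}$ is needed; and even within the Uhlmann form the worry is a red herring, since cyclicity of the trace turns $\mathrm{tr}(P_s\sqrt{\sigma}\,U\sqrt{\rho})$ into $\mathrm{tr}(\sqrt{\sigma}\,U\sqrt{\rho}\,P_s)$ for free, making the bound uniform in $U$. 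Your data-processing idea is also salvageable, but with $\tau$ supported on $\mathcal{H}_s^{\perp}$ rather than $\tau=\sigma^\star$; then $\mathcal{E}(\rho)$ is block diagonal with respect to $P_s$ and $F(\sigma,\mathcal{E}(\rho))=\mathrm{tr}(P_s\rho)F(\sigma,\rho_s)$, which just reproduces the same computation with extra machinery.
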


The above fact allows us to simplify the expression of the extractability to
\begin{equation}\label{extract}
\Theta (\rho_P \rightarrow \mathcal{H}_s ) = \, \, \max_{\Lambda_{i}} \, \, \text{tr} [P_s (\Lambda_1 \otimes \ldots \otimes\Lambda_N) (\rho_P)], 
\end{equation}
Notice that the maximum over states supported on $\mathcal{H}_s$ in Eq. (\ref{extract}) is obtained by choosing $\sigma_s = (P_s \rho_{\Lambda} P_s)/ \text{tr} (P_s \rho_{\Lambda})$, where $\rho_{\Lambda} =( \Lambda_1 \otimes \ldots \otimes\Lambda_N) (\rho_P)$.

The above bound leads us to define a natural generalisation of the approach in \cite{Jed2} to the case of subspace self-testing.
Indeed, note that the lower bound on the fidelity in Eq. (\ref{extract})
can equivalently be written as 
\begin{equation}
\text{tr} [\rho_P (\Lambda_1^{\dagger}\otimes\ldots\otimes\Lambda_N^{\dagger})(P_s)] \, ,
\end{equation} 
where $\Lambda_i^{\dagger}$ are dual maps of the quantum channels $\Lambda_i$. Now, proving for some particular channels $\Lambda_i$  
an operator inequality 
\begin{equation}\label{OpIneq}
K:=(\Lambda_1^{\dagger}\otimes\ldots\otimes\Lambda_N^{\dagger})(P_s)\geq a \mathcal{B}_N+ b \mathbbm{1}
\end{equation}
with for some $a,b \in\mathbbm{R}$, where $\mathcal{B}_N$
stands for the Bell operator corresponding to 
the Bell inequality $I_N$ and constructed 
from any possible dichotomic observables, would imply the following inequality for the subspace extractability
\begin{equation}
\Theta(\rho_P \to \mathcal{H}_s)\geq a \beta+ b.
\end{equation}
Proving an operator inequality (\ref{OpIneq})
for arbitrary local observables is generally very challenging. However, due to the fact that here we consider the simplest Bell scenario involving
two dichotomic measurements per site, one can exploit 
Jordan's lemma, which, analogously to what is done in Ref. \cite{Jed2}, allows us to reduce the problem to an $N$-qubit space. That is, 
the local observables $A_{x_i}^{(i)}$ appearing in $\mathcal{B}_N$ can now be parametrized as
\begin{equation}\label{eq:meas}
A_{x_1}^{(1)} = \cos{\alpha_1}\, X + 
(-1)^{x_1} \sin{\alpha_1}\, Z, 
\end{equation}
and
\begin{equation}\label{eq:measi}
A_{x_i}^{(i)} = \cos{\alpha_i}\, H + 
(-1)^{x_i} \sin{\alpha_i}\, V  
\end{equation}
for $i = 2, \ldots, N$,
where $H = (X + Z)/\sqrt{2}$, $V = (X - Z)/\sqrt{2}$ and $\alpha_i \in [ 0,\pi/2] $. This gives rise to 
a Bell operator $\mathcal{B}_{N}(\vec{\alpha})$ that now depends on the angles $\alpha_i$. We then consider
particular quantum channels
\begin{equation}
\Lambda_i(x) [\rho] = \frac{1 + g(x)}{2} \rho + 
\frac{1 - g(x)}{2} \Gamma_i (x) \rho \Gamma_i (x)
\end{equation}
where the dependence on the measurement angle is encoded in the function $g(x) = (1+ \sqrt{2} )(\sin{x}+\cos{x} -1)$ together with the definition of $\Gamma(x) = M_i^{a}$ if $x \leq \pi /4$ and $\Gamma(x) = M_i^{b}$ if $x > \pi /4$. Lastly, we define $M_1^{a,b} = X,
Z$ and  $M_i^{a,b} = H,V$ for $i = 2, \ldots, N$.

It is now enough to prove that for all possible
choices of $\alpha_i$, the following inequality is
satisfied
\begin{equation}\label{eq:fidbound}
K(\alpha_1,\ldots,\alpha_N)\geq a\mathcal{B}_N(\alpha_1,\ldots,\alpha_N)+b \mathbbm{1}
\end{equation}
for some choice of $a,b \in\mathbbm{R}$. 

\begin{figure}
\includegraphics[width=0.45\textwidth]{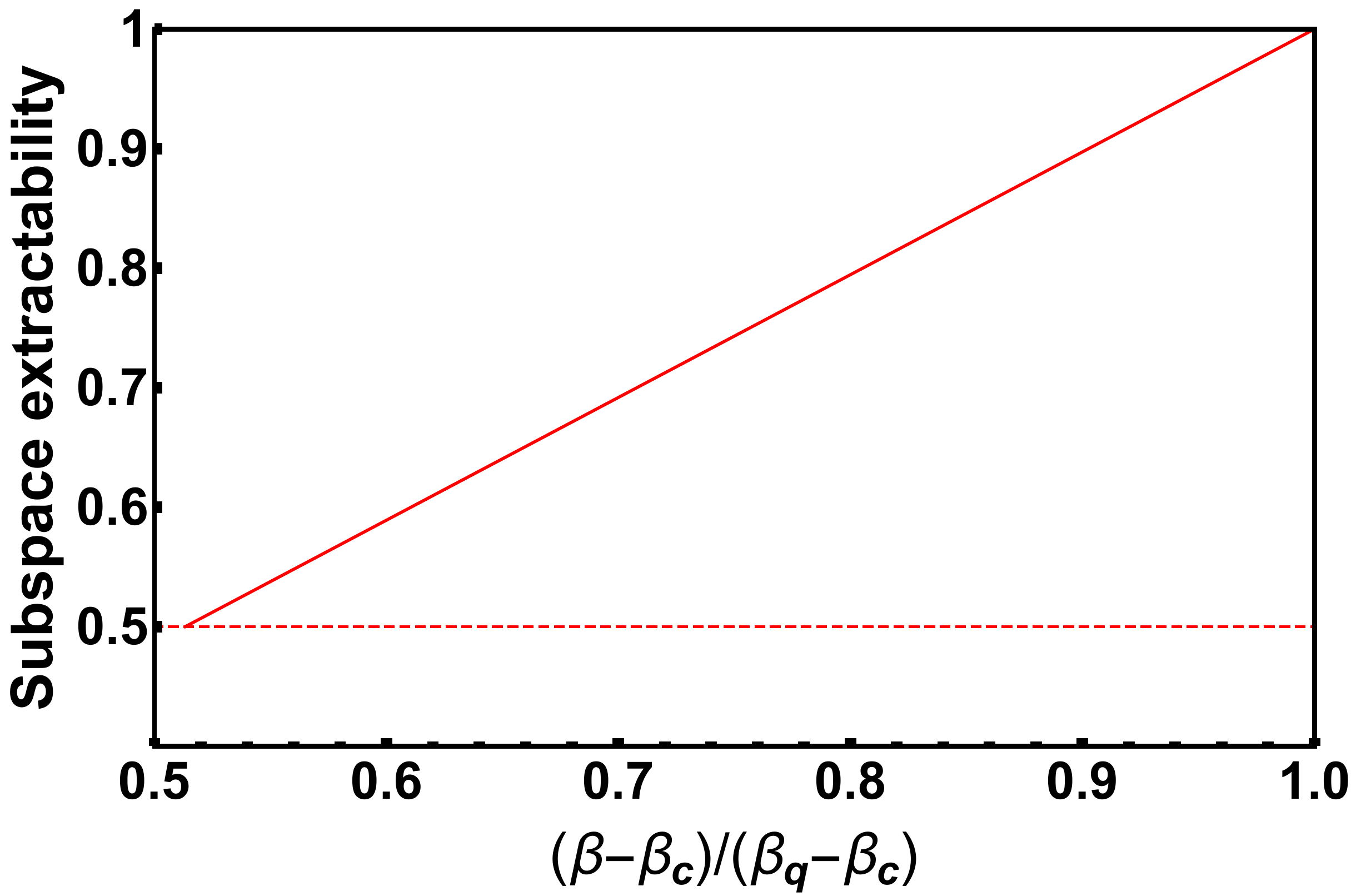}
\caption{Numerical estimation of the minimal subspace extractability for the target 5-qubit Shor's code subspace as a function of the relative observed violation $(\beta -  \beta_c)/(\beta_q - \beta_c)$ of the corresponding Bell inequality $I_5$.}
\label{fig:fidelity SM}
\end{figure}

In order to give an indication of the noise robustness of our subspace self-testing protocols, we have performed numerical tests to obtain indicative bounds \eqref{eq:fidbound} for the inequality $I_5$ in Eq. (5) in the main text. Recall that such inequality is maximally violated by any state belonging to the five-qubit error correction code subspace. The fact that such error correction code is a stabilizer code allows us conveniently represent the projector onto 
the stabilized subspace as $P_5 = \prod_{i = 1} ^4 (\mathbbm{1} + S_i) / 2$, where the four operators $S_i$ are the stabilizing operators defined in Eq. (4) of the main text.
The applied numerical procedure works as follows: given a fixed $a$, estimate the corresponding $b$ by numerically computing the minimal eigenvalue of the operator $K - a \mathcal{B}_G$ and minimizing over all the angles $\alpha_i$.
The above-mentioned minimization is performed in two steps: for a fixed value of the angles, we compute the minimal eigenvalue $b (\alpha_1,\ldots \alpha_N )$ of the $K - a \mathcal{B}_G$ operator by means of the \textsf{eig} function in MATLAB.
The second steps is then to find the configuration of angles that leads to the minimal value of $b (\alpha_1,\ldots \alpha_N )$. Notice that this second optimization is generally non-convex and indeed hard to perform in general. Nevertheless, since the optimization can be run over a compact parameter space (each measurement angle can vary from $0 $ to $2 \pi$), one can adapt the method put forward in \cite{coopmans2019robust} and discretize such intervals in $100$ steps, computed the minimal eigenvalue $b$ of the operator $K - b \mathcal{B}_G$ for each of them and identify for which value of the angles one obtains the lowest value of $b$. 

Notice that to have a fidelity bound that leads to fidelity $1$ at the point of maximal violation, the inequality \eqref{eq:fidbound} has to become tight for the measurements angles leading to the maximal violation, that is $\alpha_i = \pi/4$ for all $i = 1, \ldots, N$.
In order to get the best slope for our numerical results, we explored several values of $a$ by proceeding at paces of $0.001$ and estimated numerically the minimum value of $a$ for which the corresponding conjectured bound still satisfied such a property. This led to the numerical results presented in Fig. \ref{fig:fidelity SM}.

Let us now conclude by proving Fact \ref{extra}.
\begin{proof}
Let us consider a state $\sigma$ supported on $\mathcal{H}_s$. One has the following identities
\begin{eqnarray}\label{culo}
F(\sigma,\rho)&=&\|\sqrt{\sigma}\sqrt{\rho}\|_1^2\nonumber\\
&=&\|\sqrt{\sigma}\sqrt{P_{s}\rho P_{s}}\|_1^2\nonumber\\
&=&\mathrm{tr}(P_{s}\rho)\|\sqrt{\sigma}\sqrt{\rho_s }\|_1^2\nonumber\\
&=&\mathrm{tr}(P_{s}\rho) F(\sigma,\rho_s),
\end{eqnarray}
where in the third line we multiplied and divided by $\mathrm{tr}(P_{s}\rho)$, and defined  $\rho_s=P_{s}\rho P_{s}/\mathrm{tr}(P_{s}\rho)$, while the second equality follows from the fact that 
\begin{eqnarray}
\|\sqrt{\sigma}\sqrt{\rho}\|_1&=&\mathrm{tr}\sqrt{\sqrt{\sigma}\rho\sqrt{\sigma}}
\nonumber\\
&=&\mathrm{tr}\sqrt{\sqrt{\sigma}P_{s}\rho P_{s}\sqrt{\sigma}}\nonumber\\
&=&\|\sqrt{\sigma}\sqrt{P_{s}\rho P_{s}}\|_1,
\end{eqnarray}
which is due to the fact that $P_{s}\rho P_{s}\geq 0$ and that $\sigma P_{s}=P_{\sigma}\sigma=\sigma$, which implies
$\sqrt{\sigma} P_{s}=P_{s}\sqrt{\sigma}=\sqrt{\sigma}$.

Now, the identity (\ref{culo}) allows us to write
\begin{equation}
\max_{\sigma\in\mathcal{H}_s}F(\sigma,\rho)=\mathrm{tr}(P_s\rho)\max_{\sigma\in \mathcal{H}_s} F(\sigma,\rho_s).
\end{equation}
Since $\rho_s$ is supported on $\mathcal{H}_s$, one clearly finds that $\max_{\sigma\in \mathcal{H}_s} F(\sigma,\rho_s)=1$, which leads to (\ref{dupa}), completing the proof.
\end{proof}

\section{APPENDIX D: GEOMETRICAL STRUCTURE OF THE SET OF POINTS MAXIMALLY VIOLATING $I_N^{tor}$}

Here we show that the correlations maximally violating 
the Bell inequalities introduced in the main text span an
affine spaces in the boundary of the set of quantum correlations of dimension higher than zero. In particular, this dimension equals to one for the case of the five-qubit code, while it is at least two for the case of the toric code of any lattice size $L$.

We start by analyzing the simpler case of the five-qubit code.
First, we identify two orthogonal states $\ket{\psi_{1,2}}$ in the stabilized subspace $C_5$ as those associated to the eigenvalues $-1,+1$ for the operator $S_5 = \sz^{(1)} \sz^{(2)} \sz^{(3)} \sz^{(4)}  \sz^{(5)}$. This can be done because $S_5$ commutes with all the generators (\ref{5qubitStabII}) and it is independent from them.
By performing the local measurements leading to the maximal Bell violation of $I_5$ on these two states, the resulting behaviours $\mathcal{P}_1$, $\mathcal{P}_2$ are different. To see this, we apply to $S_5$ the map between Pauli matrices and observables that we use to derive the Bell inequalities \eqref{eq:I5} and \eqref{eqapp:Itor}. By using the fact that $\ket{\psi_{1,2}}$ are eigenstates of that operators, one derives that the corresponding behaviours lead to expectation values satisfying
\begin{equation}
\langle \tilde{S}_5 \rangle_{\mathcal{P}_i} = \frac{1}{\sqrt{2}}\langle (A_0^{(1)} - A_1^{(1)}) A_1^{(2)}  A_1^{(3)}   A_1^{(4)}   A_1^{(5)} \rangle_{\mathcal{P}_i} = (-1)^i,    
\end{equation}
which can only be fulfilled if the two correlators involved take different values for $\mathcal{P}_1$ and $\mathcal{P}_2$.
Since the two behaviours take the same values for the expectation values appearing in the Bell expression $I_5$, it follows that the corresponding correlation vectors
$\mathcal{P}_1$ and $\mathcal{P}_2$ are linearly independent, hence spanning an affine subspace of dimension one.

Let us now move to the case of the toric code, by following a similar reasoning as above. We introduce the following operators
\begin{equation}
    \mathcal{Z}_{\text{hor}}=\prod_{i=0}^{L-1} \sz^{(j+i)}
\end{equation}
and
\begin{equation}
    \mathcal{Z}_{\text{vert}}=\prod_{i=1}^{L-1} \sz^{(j + 2i)} \quad .
\end{equation}
They consist of a product of $\sz$ operators acting respectively on an horizontal and vertical loop around the torus, containing a reference qubit $j$. It is direct to see that they mutually commute and that they
commute with every plaquette $S_p$ and vertex $S_v$ operator. 
Moreover, they are independent of the stabilizer
$\mathbb{S}_N^{\mathrm{tor}}$. Hence, we can use them to define an orthonormal basis for the two-qubit subspace of the toric code.
More precisely, such a basis can be chosen to be the collection of four states $\lbrace \ket{\psi_{ab}} \rbrace_{a,b = \pm 1}$ in $C_N^{\mathrm{tor}}$ defined as the eigenstates of $\mathcal{Z}_{\text{hor}}$ ($\mathcal{Z}_{\text{ver}}$) with eigenvalue $a$ ($b$).

By performing the measurements leading to the maximal violation of \eqref{eqapp:Itor} on each of those states, one obtains the corresponding correlation points $\mathcal{P}_{ab}$.
To show that, for $a,b = \pm 1$, these four points are represented by linearly independent vectors, it is enough to make use of the fact that the related states $\ket{\psi_{ab}}$ are eigenvectors of the two additional operators  $\mathcal{Z}_{\text{hor}}$ and $\mathcal{Z}_{\text{vert}}$.
In particular, let us assume that the reference qubit $j$ is the one where the substitution $\sx^{(j)},\sz^{(j)} \rightarrow (\rA^{(j)}_0 \pm \rA^{(j)}_1)/\sqrt{2}$ has been made.
Then if follows that the corresponding correlation points must satisfy 
\begin{equation}
\langle \tilde{\mathcal{Z}}_{\text{hor}} \rangle_{\mathcal{P}_{ab}} = \frac{1}{\sqrt{2}} \left\langle ( \rA^{(j)}_0 - \rA^{(j)}_1) \prod_{i = 1}^{L-1} \rA^{(j+i)}_1 \right\rangle_{\mathcal{P}_{ab}} = a      
\end{equation}
and
\begin{equation}
\langle \tilde{\mathcal{Z}}_{\text{vert}} \rangle_{\mathcal{P}_{ab}} = \frac{1}{\sqrt{2}} \left\langle ( \rA^{(j)}_0 - \rA^{(j)}_1) \prod_{i = 1}^{L-1} \rA^{(j+2i)}_1 \right\rangle_{\mathcal{P}_{ab}} = b \quad .
\end{equation}
Let us notice that one can follow the same reasoning for every other pair of vertical and horizontal loop operators, leading to similar conditions for other $L-$body expectation values arising from the $\mathcal{P}_{ab}$'s. By combining the above conditions with the fact that the four behaviours take the same values on the expectation values involved in $I^{tor}_N$, one can easily convince themself that at least three of the four corresponding correlation vectors are linearly independent for any $N$. Hence, the affine subspace spanned by the four behaviours has dimension at least two.

\bibliography{Subspaces}

\end{document}